\newcommand{\StatexIndent}[1][3]{%
  \setlength\@tempdima{\algorithmicindent}%
  \Statex\hskip\dimexpr#1\@tempdima\relax}
\algnewcommand{\algorithmicgoto}{\textbf{go to }}%
\algnewcommand{\Goto}[1]{\algorithmicgoto Line~\ref{#1}}%
\algnewcommand{\Label}{\State\unskip}
\renewcommand{\natural}{{\mathbb{N}}}
\newcommand{\real}{{\mathbb{R}}}
\newcommand{\union}{\cup}
\newcommand{\map}[3]{#1: #2 \rightarrow #3} 
\newcommand{\setdef}[2]{\{#1 \mid #2\}}
\newcommand{\GG}{\mathcal{G}}
\newcommand{\LL}{\mathcal{L}}
\newcommand{\bx}{\mathbf{x}}
\newcommand{\until}[1]{\{1,\dots,#1\}} 
\newcommand{\diag}{\operatorname{diag}} 
\newcommand{\EE}{\mathcal{E}}
 \newcommand{\subj}{\text{subj.\ to}}
\newcommand{\E}{\mathbb{E}}
\newcommand{\argmin}{\mathop{\operatorname{argmin}}}
\newcommand{\nbrs}{\mathcal{N}}
\newtheorem{theorem}{Theorem}[section]
\newtheorem{remark}[theorem]{Remark}
\newtheorem{assumption}[theorem]{Assumption}
\newcommand\oprocendsymbol{\hbox{$\square$}}
\newcommand\oprocend{\relax\ifmmode\else\unskip\hfill\fi\oprocendsymbol}
\def \synchfullname/{Partitioned Dual Decomposition}
\def \synch_name/{PDD}
\def \asynchfullname/{Asynchronous Partitioned Dual Decomposition}
\def \asynch_name/{AsynPDD}
\begin{document}

\title{Distributed Partitioned Big-Data Optimization\\ via Asynchronous Dual Decomposition}

\author{
  Ivano Notarnicola$^{1}$,~\IEEEmembership{Student Member,~IEEE,} 
  Ruggero Carli$^{2}$,~\IEEEmembership{Member,~IEEE,}
  and Giuseppe Notarstefano$^{1}$,~\IEEEmembership{Member,~IEEE}

  \thanks{
  A preliminary short version of this paper has appeared as~\cite{carli2013distributed}, 
  where only a synchronous dual scheme for the partitioned set-up was considered.
  }
  \thanks{$^{1}$Ivano Notarnicola and Giuseppe Notarstefano are with the 
    Department of Engineering, Universit\`a del Salento, Lecce, Italy, 
    \texttt{name.lastname@unisalento.it}. This result is part of a
    project that has received funding from the European Research Council (ERC)
    under the European Union's Horizon 2020 research and innovation programme
    (grant agreement No 638992 - OPT4SMART).}
    \thanks{$^{2}$Ruggero Carli is with the Department of Information 
    Engineering, University of Padova, Italy, \texttt{carlirug@dei.unipd.it}.
    }
}

\maketitle

\begin{abstract}
In this paper we consider a novel partitioned framework for distributed
optimization in peer-to-peer networks. In several important applications the
agents of a network have to solve an optimization problem with two key features:
(i) the dimension of the decision variable depends on the network size, and (ii)
cost function and constraints have a sparsity structure related to the
communication graph. For this class of problems a straightforward application of
existing consensus methods would show two inefficiencies: poor scalability and
redundancy of shared information. We propose an asynchronous distributed
algorithm, based on dual decomposition and coordinate methods, to solve
partitioned optimization problems. We show that, by exploiting the problem
structure, the solution can be partitioned among the nodes, so that each node
just stores a local copy of a portion of the decision variable (rather than a
copy of the entire decision vector) and solves a small-scale local problem.
\end{abstract}

\section{Introduction}
\label{sec:intro}
Distributed optimization has received a widespread attention in the last years
due to its key role in multi-agent systems (also known as large-scale systems,
sensor networks or peer-to-peer networks). Several solutions have been proposed,
but many challenges are still open. In this paper we focus on a main common
limitation of the current approaches. That is, in all the currently available
algorithms the nodes in the network reach consensus on the entire solution
vector. This redundancy of information may be not necessary or even realizable
in some problem set-ups. Thus, we exploit a new distributed optimization set-up
in which the nodes compute only a portion of the solution and the whole
minimizer may be obtained by stacking together the local portions.

We divide the relevant literature for our paper in two parts. That is, we review
works on distributed optimization more closely related to the techniques
proposed in this paper, and the centralized and parallel literature on big-data
optimization.

Early references on distributed optimization
are~\cite{Nedic2009distributed,Nedic2010constrained}. Convex optimization
problems are solved by using a primal distributed subgradient method combined
with a consensus scheme.
Dual decomposition methods have been proposed in early references in order to
develop distributed algorithms in a pure peer-to-peer set-up. In
\cite{yang2011distributed} a tutorial on network optimization via dual
decomposition can be found. 
In~\cite{terelius2011decentralized} a synchronous distributed algorithm
based on a dual decomposition approach is proposed for a convex optimization 
problem with a common constraint for all the agents.
In~\cite{zhu2012distributed} equality and inequality
constraints are handled in a distributed set-up based on duality. 
In~\cite{duchi2012dual} a distributed algorithm based on an averaging
scheme on the dual variables is proposed, to solve convex optimization problems
over fixed undirected networks.
A slightly
different set-up is considered in~\cite{falsone2017dual}, where a dual
decomposition method over time-varying graphs is proposed.
In order to induce robustness in the computation and improve convergence in the
case of non-strictly convex functions, Alternating Direction Methods of
Multipliers (ADMM) have been proposed in the network
context~\cite{Schizas2008consensus}. A distributed consensus optimization
algorithm based on an inexact ADMM is proposed in~\cite{chang2015multi}.
In~\cite{wei2013on} an asynchronous ADMM-based distributed method is proposed 
for a separable, constrained optimization problem.
A different class of algorithms, working under a general asynchronous
and directed communication, is based on the exchange of cutting planes among the
network nodes~\cite{burger2014polyhedral} and can be applied also in its dual
form to separable convex programs.

A common drawback of the above algorithms is that they are well suited for a
set-up in which either the dimension of the decision variable or the number of
constraints is constant with respect to the number of nodes in the network. In
case both the two features depend on the number of nodes each local computing
agent needs to handle a problem whose dimension is not scalable with respect to
the network dimension.
To cope with big-data optimization problems, deterministic and randomized coordinate 
methods for both unconstrained and constrained optimization have been proposed,
see e.g.,~\cite{bertsekas1989parallel,nesterov2012efficiency,nesterov2013gradient}.
More general set-ups such as composite and/or separable optimization in a
parallel scenario have been addressed for convex problems
in~\cite{richtarik2014iteration,richtarik2016parallel}, %
whereas nonconvex problems are considered
in~\cite{facchinei2015parallel,daneshmand2015hybrid,patrascu2015efficient}.
In~\cite{necoara2013random} an edge-based distributed algorithm is proposed 
to solve linearly coupled optimization problems via a coordinate descent method.
A distributed coordinate primal-dual asynchronous algorithm is proposed 
in~\cite{bianchi2016coordinate} to deal with large-scale problems.
A dual approach has been combined with a coordinate proximal gradient
in~\cite{notarnicola2017asynchronous} to propose an asynchronous distributed
algorithm for composite convex optimization.

In this paper we investigate a class of problems of interest in several
multi-agent applications in which the decision variable grows as the number of
nodes in the network, but the cost function and the constraints have a special
partitioned structure.
We show that such structure is not derived just as a pure academic exercise, but
vice-versa appears in several important application scenarios. In particular, we
present two of them that have been widely investigated in the literature, namely
distributed quadratic estimation and network utility maximization (and its
related resource allocation version).

The main contribution of this paper is as follows. For this problem set-up we
provide two distributed optimization algorithms, based on dual decomposition,
with two main appealing features. First, the algorithms are scalable, in the
sense that each node only processes a portion of the decision variable
vector. As a result, the information stored and the computation performed by
each node does not depend on the network size as long as the node degree is
bounded. Second, the asynchronous algorithm works under a communication protocol
in which a node wakes-up when triggered by its local timer or by
its neighbors, so that no global clock is needed.
The distributed algorithms are derived by first writing a suitable equivalent formulation 
of the original primal optimization problem (which exploits the partitioned
structure). Then its dual problem is derived and solved with suitable algorithms.
A scaled gradient applied to the dual problem turns out to be a partitioned version
of the distributed dual decomposition (synchronous) algorithm. 
A randomized ascent method applied to the dual problem allows us to write
an asynchronous distributed algorithm that converges in objective value with
high probability.

As opposed to~\cite{yang2011distributed,zhu2012distributed,duchi2012dual,terelius2011decentralized}, 
even though we also consider a dual decomposition approach, we tailored the methodology 
for the partitioned set-up, thus explicitly taking into account the partitioned 
structure of the cost and the constraints. 
This results in algorithmic formulations that reduce memory and 
communication burden.
The partitioned set-up considered in this paper has been introduced
in~\cite{erseghe2012distributed}, where a distributed ADMM algorithm is
proposed. In~\cite{erseghe2015distributed} a nonconvex maximum likelihood
localization partitioned problem is solved via a similar distributed ADMM
scheme.
In~\cite{necoara2016parallel} a convex composite optimization problem 
is considered where the cost has a partitioned part and a fully 
separable remainder. A parallel coordinate algorithm is proposed with 
its convergence analysis.
In~\cite{todescato2015robust} a robust block-Jacobi algorithm for a partitioned
quadratic programming under lossy communications is proposed.
A related formulation of the partitioned problem is the one considered 
in~\cite{mota2015distributed}, where the D-ADMM distributed algorithm 
proposed in~\cite{mota2013dadmm} has been applied.
Differently from the above references, in this paper we propose a dual
  decomposition algorithm for optimization problems in which also the
  constraints exhibit a partitioned structure.  Moreover, we develop an
  asynchronous distributed algorithm, inspired to
  \cite{notarnicola2017asynchronous}, by combining dual decomposition with
  coordinate methods.
The paper is organized as follows. In Section~\ref{sec:setup} we present the
partitioned optimization framework and describe two motivating applications. In
Section~\ref{sec:dual_derivation_and_algorithms} we develop a partitioned
distributed dual decomposition approach, then we propose and analyze our
synchronous and asynchronous distributed algorithms. Finally, in
Section~\ref{sec:simulations} we run simulations to corroborate
the theoretical results.

\section{Problem Set-up and Motivating Scenarios}
\label{sec:setup}

\subsection{Problem Set-up}
We consider a network of agents aiming at solving a structured optimization
problem in a distributed way. The nodes, $\until{n}$, interact according to
a fixed \emph{connected, undirected} graph $\GG = (\until{n}, \EE)$. We denote
$\nbrs_i$ the set of neighbors of node $i$ in $\GG$, that is
$\nbrs_i = \setdef{ j\in \until{n} }{(i,j) \in \EE}$. 
As we will see in the following, the graph $\GG$ is related to the structure of
the optimization problem.

As for the communication, we will consider a synchronous
communication protocol in which nodes communicate over the fixed graph according
to a common clock, and an asynchronous protocol in which, although the
neighboring agents are determined by the fixed graph $\GG$, communication
happens asynchronously. We will formally define this last communication protocol
in the next sections.

We start by reviewing a common set-up in distributed optimization. That is, we
consider the minimization of a separable cost function subject to local
constraints,
\begin{align}
  \begin{split}
    \min_{x\in\real^N} &\; \sum_{i=1}^n f_i(x)\\
    \subj &\; x \in X_i, \qquad i\in\until{n},
  \end{split}
  \label{eq:general_problem}
\end{align}
where $\map{f_i}{\real^N}{\real}$ and $X_i \subseteq \real^N$ for all $i\in\until{n}$.
In our set-up the local objective function $f_i$ and
the local constraint set $X_i$ are known only by agent $i$.

In this paper we want to consider problems as in~\eqref{eq:general_problem} with
a specific feature, that is a \emph{partitioned structure}, that we next
describe. Let the vector $x$ be partitioned as
\begin{align*}
  x= [ x_1^\top, \ldots, x_n^\top ]^\top
\end{align*}
where, for $i\in \until{n}$, $m_i\in \natural$, $x_i\in \real^{m_i}$ and
$\sum_{i=1}^n m_i=N$. The sub-vector $x_i$ represents the relevant information
at node $i$, hereafter referred to as the state of node $i$. Additionally, let
us assume that the local objective functions and the constraints have the same
sparsity as the communication graph, namely, for $i\in \until{n}$, the function
$f_i$ and the constraint $X_i$ depend only on the state of node $i$ and on its
neighbors, that is, on $\left\{x_j,\,j\in \nbrs_i \cup \left\{i\right\}\right\}$. 
Then the problem we aim at solving distributedly is
\begin{equation}
  \begin{split}
    \min_{x} &\; \sum_{i=1}^n f_i(x_i, \{x_j\}_{j\in \nbrs_i})\\
    \subj &\; (x_i,\{x_j\}_{j\in \nbrs_i}) \in X_i, \qquad i\in\until{n},
  \end{split}
  \label{eq:partitioned_problem}
\end{equation}
where the notation $f_i(x_i, \{x_j\}_{j\in \nbrs_i})$ means that
$\map{f_i}{\real^N}{\real}$ is in fact a function of $x_i$ and $x_j$, $j\in
\nbrs_i$, and the notation $(x_i,\{x_j\}_{j\in \nbrs_i}) \in X_i$ means that the
constraint set $X_i$ involves only the variables $x_i$ and $x_j$, $j\in \nbrs_i$.

We stress that the constraint sets $X_i$ can involve all (neighboring)
variables $(x_i,\{x_j\}_{j\in \nbrs_i})$ of agent $i$ and not just $x_i$. This
apparently minor feature in fact adds much more generality to the problem and
introduces important significant challenges. 

The following assumptions will be used in the paper. 
\begin{assumption}
  For all $i\in\until{n}$, the function $\map{f_i}{\real^{\sum_{j\in\nbrs_i \cup \{i\}} \!\!m_j}\! }{\!\!\real}$ is
  strongly convex with parameter $\sigma_i>0$.~\oprocend
  \label{ass:strong_convexity}
\end{assumption}

\begin{assumption}
  The constraint sets $X_i\subseteq \real^{\sum_{j\in\nbrs_i \cup \{i\}}m_j}$, $i\in\until{n}$ are 
  nonempty convex and compact.~\oprocend  
  \label{ass:convex_constraints}
\end{assumption}

\begin{assumption}[Constraint qualification]
  The~intersection of the relative interior of the sets $X_i$, $i\in\until{n}$,
  is non-empty.~\oprocend
  \label{ass:slater}
\end{assumption}

Under Assumptions~\ref{ass:strong_convexity} and \ref{ass:convex_constraints} 
problem~\eqref{eq:partitioned_problem} is feasible and admits a unique optimal 
solution $f^\star$ attained at some $x^\star \in \real^N$.
Assumption~\ref{ass:slater} is a standard requirement to guarantee that a dual 
approach will enjoy the strong duality property.

\subsection{Motivating Examples}
\label{sec:motivating}
Next we provide two application scenarios in which the partitioned
structure of the optimization problem arises naturally.

\subsubsection{Distributed estimation in power networks}
To describe this example we follow the treatment in \cite{pasqualetti2012}.

For a power network, the state at a certain instant of time consists of the
voltage angles and magnitudes at all the system buses. The (static) state
estimation problem refers to the procedure of estimating the state of a power
network given a set of measurements of the network variables, such as, for
instance, voltages, currents, and power flows along the transmission lines. To
be more precise, let $x \in \real^N$ and $z \in \real^P$ be, respectively, the state and
measurements vector.  Then, the vectors $x$ and $z$ are related by the relation
\begin{equation}
  z = h(x) + \eta,
  \label{eq:Measure}
\end{equation}
where $h(\cdot)$ is a nonlinear measurement function, and where $\eta$ is the noise
measurement, which is traditionally assumed to be a zero mean random vector
satisfying $\E[\eta \eta^\top] = \Sigma \succ 0$. An optimal estimate of the network
state coincides with the most likely vector $x^\star$ that solves equation
\eqref{eq:Measure}. This static state estimation problem can be simplified by
adopting the approximated estimation model presented in \cite{schweppe70}, which
follows from the linearization around the origin of equation~\eqref{eq:Measure}. 
Specifically,
\begin{align*}
  z = Hx + v,
\end{align*}
where $H \in \real^{P \times N}$ and where $v$, the noise measurement, is such that
$\E[v] = 0$ and $\E[vv^\top] = \Sigma$. In this context the static state estimation
problem is formulated as the following weighted least-squares problem
\begin{equation}
  \argmin_{x} \: (z - Hx)^\top \Sigma^{-1} (z - Hx).
  \label{eq:wlsPowerNetworks}
\end{equation}
Assume $\ker(H) = {0}$, then the optimal solution to the above problem is given
by
\begin{align*}
  x_{\text{wls}} = \left( H^\top \Sigma^{-1} H \right)^{-1} H^\top \Sigma^{-1}z.
\end{align*}
For simplicity let us assume that $\Sigma=I$.  For a large power network,
the centralized computation of $x_{\text{wls}}$ might be too onerous. A possible
solution to address this complexity problem is to distribute the computation of
$x_{\text{wls}}$ among geographically deployed control centers (monitors), say $n$ in a way
that each monitor is responsible only for a subpart of the whole
network. Precisely let the matrices $H$ and $\Sigma$ and the vector $z$ be
partitioned as $\left[H_{ij}\right]_{i,j=1}^n$, $x=\left[ x_1^\top, \ldots, x_n^\top \right]^\top$
and $z=\left[z_1^\top, \ldots, z_n^\top \right]^\top$, where 
$H_{ij} \in\real^{p_i \times m_j}$, $z_i\in  \real^{p_i}$, $x_i\in \real^{m_i}$ and 
$\sum_{i=1}^n m_i = N$, $\sum_{i=1}^n p_i = P$. Observe that, because of the interconnection
structure of a power network, the measurement matrix $H$ is usually sparse, i.e.,
many $H_{ij}=0$.  Assume monitor $i$ knows $z_i$ and $H_{ij}$, $j\in \until{n}$
and it is interested only in estimating the sub-state $x_i$. Moreover let
$\nbrs_i=\left\{j \in \until{n} \mid H_{ij}\neq 0\right\}$. Observe that in general
if $H_{ij}\neq 0$ then also $H_{ji}\neq 0$. Then by defining
\begin{align*}
  f_i \big( x_i, \{x_j\}_{j\in \nbrs_i} \big) = 
  \Big( z_i-\sum_{j\in \nbrs_i} H_{ij}x_j \Big)^{\!\top}
  \Big( z_i-\sum_{j\in \nbrs_i} H_{ij}x_j \Big),
\end{align*}
problem~\eqref{eq:wlsPowerNetworks} can be equivalently rewritten as
\begin{align*}
  \argmin_x \: \sum_{i=1}^n f_i \big( x_i, \{x_j\}_{j\in \nbrs_i} \big)
\end{align*}
which is of the form~\eqref{eq:partitioned_problem}. 

It is worth remarking that there are other significant examples that can be 
cast as distributed weighted least square problems similarly to the static 
state estimation in power networks we have described in this section; see, 
for instance, distributed localization in sensor networks and map building 
in robotic networks.

\subsubsection{Network utility maximization and resource allocation}
We consider the flow optimization problem, or \emph{Network Utility Maximization
(NUM)} problem introduced in \cite{kelly1998rate} and studied in~\cite{low1999optimization} 
in a distributed context. A flow network (which is
different from a communication network) consists of a set $L$ of unidirectional
links with capacities $c_\ell$, $\ell \in L$. The network is shared by a set of $n$
sources.
Each source has a strongly concave utility function $U_i(x_i)$
The goal is to calculate source
rates that maximize the sum of the utilities $\sum_{i=1}^n U_i(x_i)$ over $x_i$
subject to capacity constraints. Formally, using a notation consistent with
\cite{low1999optimization}, let $L(i) \subseteq L$ be the set of links used by
source $i$ and $N(\ell) = \{i \in \until{n} \mid \ell \in L(i)\}$ be the set of
sources that use link $\ell$. Note that $\ell \in L(i)$ if and only if
$i \in N(\ell)$.  Also, let $I_i = [\kappa_i, K_i]$, with $0\le \kappa_i<K_i$,
be the interval of transmission rates allowed to node $i$. The network flow
optimization problem is given by
\begin{equation}
  \label{eq:network_flow_optimization}
  \begin{split}
    \max_{x_1,\ldots,x_n} \: & \: \sum_{i=1}^n U_i(x_i) 
    \\
    \subj \: & \: x_i \in I_i,\hspace{1.9cm} i \in \until{n},
    \\
    & \sum_{j \in N(\ell)} x_j \leq c_\ell, \hspace{0.85cm} \ell \in \until{ | L | }.
  \end{split}
\end{equation}
Notice that problem~\eqref{eq:network_flow_optimization} is well posed and has
compact domain.

In Figure~\ref{fig:NUM} (left) we graphically represent an example of $5$
sources (filled circles) that use (dotted arrows) $3$ links (gray stripes). 
In~\cite{low1999optimization} a distributed optimization algorithm is proposed
in which both the sources and the links are computation units.
Here we consider a set-up in which only the sources are computation units. In
particular, the sources have the computation and communication capabilities
introduced in the previous subsection. We assume that sources using the same
links can communicate and both know the capacity constraint
on those links. Formally, we introduce a graph $\GG$ having
an edge $(i,j)$ connecting source $i$ to $j$ if and only if there exists 
$\ell \in L$ such that $\ell \in L(i)\cap L(j)$.
In Figure~\ref{fig:NUM} (right) we show the induced communication graph 
(solid lines) for the considered example.
\begin{figure}[!ht]
\centering
  \includegraphics[scale=0.8]{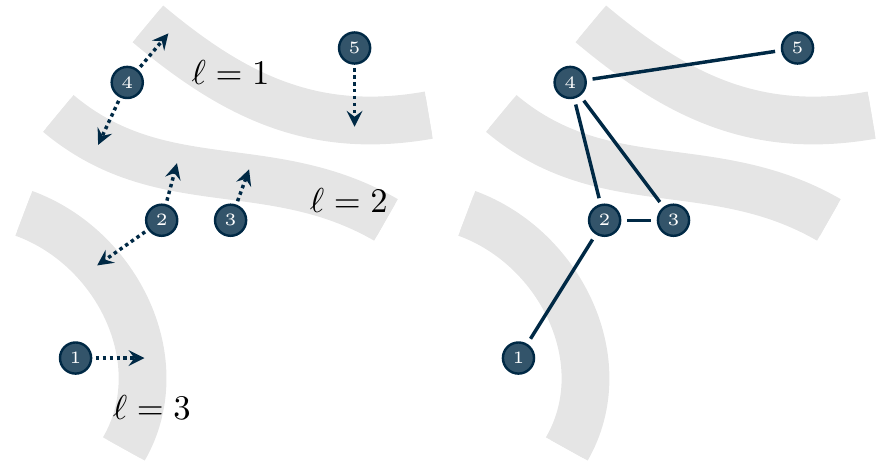}
  \caption{
  Network Utility Maximization problem with $5$ sources (filled circles) using $3$ links 
  (gray stripes).
  }
\label{fig:NUM}
\end{figure}

Thus, optimization problem~\eqref{eq:network_flow_optimization} can be rewritten as%
\begin{align}
  \begin{split}
    \max_{x_1,\ldots,x_n} \: & \: \sum_{i=1}^n U_i(x_i) 
    \\
    \subj \: & \:\:  (x_i,\{x_j\}_{j\in \nbrs_i}) \in
    \prod_{j\in\nbrs_i \cup \{i\}} I_j, \: i\in\until{n},
    \\
    & \sum_{j \in \nbrs_i \cup \{i \} } \!\! a_{j,\ell} \, x_j \leq c_\ell, \hspace{0.2cm} \ell \in L(i),\:i\in\until{n},
  \end{split}
  \label{eq:network_flow_optimization_partitioned}
\end{align}
where $\nbrs_i$ is the neighbor set of $i$ in $\GG$, 
$a_{j,\ell} $ is $1$ if agent $j$ can use link $\ell$ 
and $0$ otherwise.
Notice that in problem~\eqref{eq:network_flow_optimization_partitioned} if
sources $i$ and $j$ share a link $\ell$, then they both have the capacity
constraint of link $\ell$. Moreover, in order to have compactness of
the local constraint set $X_i$, transmission rate constraints of neighboring
nodes are also taken into account.
Finally, in order to fulfill the strong convexity assumption on the local costs,
two strategies can be used. First, one can assume that each agent
knows the utility functions of neighboring agents, so that it sets
  $f_i ( x_i,\{x_j \}_{j\in\nbrs_i} ) = \sum_{j\in\nbrs_i \cup \{i\} } U_j (x_j )$.
Alternatively, one can consider an additional separable (small) regularization term in the NUM problem formulation
in~\eqref{eq:network_flow_optimization}, e.g., $\epsilon \sum_{i=1}^n x_i^2$ with $\epsilon>0$. 
In this case each agent sets its local cost function to
  $f_i ( x_i,\{x_j \}_{j\in\nbrs_i} ) = U_i (x_i ) + \sum_{j\in\nbrs_i } \epsilon_{ij} x_j^2$,
where $\epsilon_{ij}>0$ are suitable fractions of $\epsilon$.
Except for the maximization versus minimization, this problem 
is partitioned, that is it has the same structure 
as~\eqref{eq:partitioned_problem}.

A problem with this structure can be also found in resource allocation problems,
which are of great importance in several research areas. In the context of
network systems solving resource allocation problems in a distributed way is a
preliminary task to solve several control and estimation problems. Indeed, it is
often the case that the agents in the network have some local resource that have
to share with their neighbors.

Consider a general set-up in which each agent produces a certain amount of
resource, which it can share with its neighbors (i.e., neighboring nodes in the
communication graph). Each agent has a local strongly concave utility function 
to maximize. The resulting optimization problem turns out to be
\begin{align*}
  \max_{x_1,\ldots,x_n} \: & \: \sum_{i=1}^n U_i (x_i)
  \\
  \subj \: &\: \sum_{j\in \nbrs_i \union \{ i \} } x_j \leq r_i, \qquad i\in\until{n},
\end{align*}
where $x_i$ is the resource produced by node $i$, $r_i$ is the capacity of node
$i$ and we are assuming that the set of neighbors with whom node $i$ can share
its resource coincides with the set of neighbors in the communication graph. In
other words agents can share resources only if they can communicate.

It is worth noting that dual decomposition is often used in network utility
maximization and resource allocation problems. See for example
\cite{palomar2006tutorial} for a tutorial on dual decomposition methods in
network utility maximization. Usually, in this context, the capacity constraints
are dualized to obtain a master-subproblem or a distributed algorithm. However,
in these early references, as e.g., in \cite{low1999optimization}, the dual
decomposition gives rise to algorithms that are not suited for a pure
peer-to-peer network as the one we consider.  In our partitioned approach
the dual decomposition is used to enforce the coherence constraints, whereas
the capacity constraints are taken into account in the primal local
minimization. These aspects will be more clear in the next section in which we derive 
the partitioned dual decomposition approach.

\section{Partitioned Dual Decomposition for Distributed Optimization}
\label{sec:dual_derivation_and_algorithms}

In order to introduce our distributed algorithms, we derive a partitioned
dual decomposition scheme by introducing suitable copies of the decision
variables.

As a preliminary step, we briefly recall the standard dual decomposition approach for
distributed optimization.
In order to solve problem~\eqref{eq:general_problem} in a distributed way, a
common approach consists of writing it in the equivalent form
\begin{equation}
  \begin{split}
    \min_{x^{(1)}, \ldots, x^{(n)}} \: &\: \sum_{i=1}^n f_i(x^{(i)})
    \\
    \subj \: &\: x^{(i)} \in X_i,  \hspace{0.7cm} i\in\until{n},
    \\
    &\: x^{(i)} = x^{(j)}, \hspace{0.5cm} (i,j)\in \EE,
  \end{split}
  \label{eq:general_problem_copies}
\end{equation}
where each $x^{(i)}$ can be seen as a copy of $x$ subject to the additional
constraint that all the copies must be equal. Clearly, the connected nature 
of the network ensures equality between all $x^{(i)}$ and, in turn, the 
equivalence between~\eqref{eq:general_problem_copies} and~\eqref{eq:general_problem}.

When considering a partitioned problem as in \eqref{eq:partitioned_problem},
because of the structure of $f_i$ and $X_i$, $i \in \until{n}$, the
formulation~\eqref{eq:general_problem} is considerably redundant.
The idea is to exploit the partitioned structure to modify
\eqref{eq:general_problem_copies} in order to limit the range of equivalences
among the auxiliary variables, and, in turn, their diffusion over the network.

\subsection{Partitioned Dual Decomposition Set-up}
\label{sec:dual_dec_setup}
Once we create copies of the vector $x\in\real^N$, we enforce each state $x_i\in\real^{m_i}$ 
to be identical only for the neighboring nodes $j\in \nbrs_i \cup \{ i \}$
which use this information. Formally, we reformulate
problem~\eqref{eq:partitioned_problem} as
\begin{equation}
  \begin{split}
    \min \: & \: \sum_{i=1}^n 
      f_i \big( x^{(i)}_i, \{x^{(i)}_j\}_{j\in \nbrs_i} \big) 
      \\
    \subj \: & \: \big( x^{(i)}_i,\{x^{(i)}_j\}_{j\in \nbrs_i} \big) \in X_i \qquad i\in\until{n},
      \\
      & \: x^{(i)}_i = x^{(j)}_i, \qquad\qquad j\in \nbrs_i, \; i\in\until{n},
      \\
      & \: x^{(i)}_j = x^{(j)}_j, \qquad\qquad j\in \nbrs_i, \; i\in\until{n},
  \end{split}
  \label{eq:partitioned_problem_copies}
\end{equation}
where $x^{(j)}_i$ denotes the copy of state $x_i$ stored in memory of node $j$. 
Notice that connected nature of the graph $\GG$ ensures equivalence
between~\eqref{eq:partitioned_problem} and~\eqref{eq:partitioned_problem_copies}.

  As an example, in Figure~\ref{fig:partitions} we visualize the partitioned
  set-up for a path graph of $n=4$ nodes.  Along $i$-th column, we show the
  coupling due to the local cost $f_i$ and the local constraint $X_i$, which
  involves only the states handled by node $i$, i.e., $x_i^{(i)}$ and
  $x_j^{(i)}$ with $j\in\nbrs_i$. Along the $i$-th row, we show the coupling due
  to copies $x_i^{(j)}$, $j\in\nbrs_i$, of the variable $x_i$. 
\begin{figure}[!ht]
  \centering
\hspace{1cm}
\includegraphics[scale=0.9]{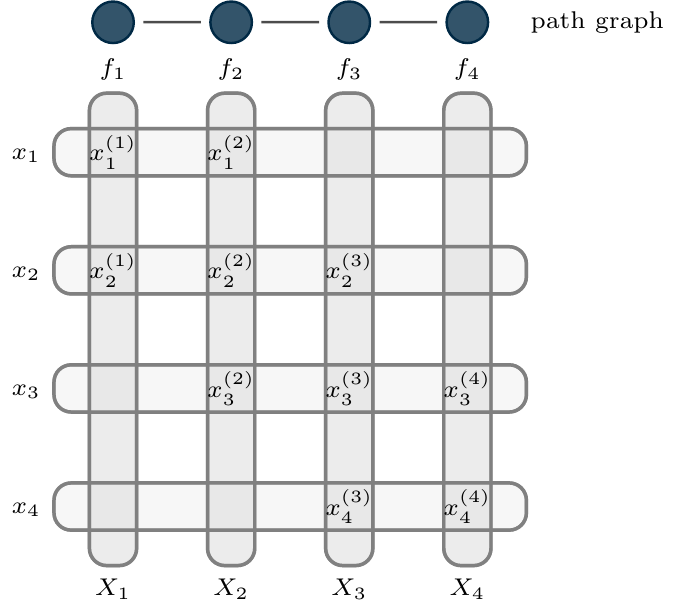}
\caption{
  Partitioned optimization problem over a path graph of $n = 4$ nodes.
  }
\label{fig:partitions}
\end{figure}

Before proceeding with presentation of the algorithms, we discuss two key
features in the structure of the above problem.
First, it is worth noting that the problem formulations
\eqref{eq:general_problem_copies} and \eqref{eq:partitioned_problem_copies},
although equivalent, are different. In fact,
\eqref{eq:partitioned_problem_copies} will lead to our partitioned
algorithm.
Second, we point out that a constraint $x^{(i)}_i = x^{(j)}_i$ for a pair of
agents $i$ and $j$ appears two times. This redundant formulation is not
accidental, but plays an important role in exploiting the partitioned structure
of the proposed algorithm.

Next, we introduce an aggregate notation for
the copies, which allows us to be more compact in the derivation of the
algorithms and their analysis. We denote by
\begin{align}
  y^{(i)} := \big( x^{(i)}_i, \{x^{(i)}_j\}_{j\in \nbrs_i} \big)
  \label{eq:yi_definition}
\end{align}
the set of local variables of node $i$, arranged as a column vector in 
$\real^{\sum_{j\in\nbrs_i \cup \{ i\} }m_j}$. In this way we can write 
equivalently
\begin{align*}
  f_i \big( x^{(i)}_i, \{x^{(i)}_j\}_{j\in \nbrs_i} \big) = f_i \big( y^{(i)}
  \big) \quad \text{and} \quad y^{(i)}\in X_i.
\end{align*}

To tackle problem~\eqref{eq:partitioned_problem_copies} in a distributed way, we 
start by deriving its dual problem. 
The \emph{partial} Lagrangian for problem
\eqref{eq:partitioned_problem_copies} is given by
\begin{align}
  \label{eq:partial_lagrangian}
    \LL({\bx,\Lambda}) & = \sum_{i=1}^n \bigg ( 
      f_i \big( x^{(i)}_i, \{x^{(i)}_j\}_{j\in \nbrs_i} \big)
    \\ \notag &
    +\!\! \sum_{j\in \nbrs_i} \Big ( \lambda^{(i,j) \top}_i (x^{(i)}_i - x^{(j)}_i)
        + \! \lambda^{(i,j) \top}_j (x^{(i)}_j - x^{(j)}_j) \! \Big )\! \bigg ),
\end{align}
where $\bx$ stacks all the (primal) optimization variables in the network, while $\Lambda$ 
denotes the stack of dual variables, i.e.,
\begin{align*}
  \Lambda = \big[ \Lambda_1^\top, \ldots, \Lambda_n^\top \big] ^\top,
\end{align*}
with block $\Lambda_i := [ \{\lambda_i^{(i,j)} \}_{j\in\nbrs_i}, \{\lambda_j^{(i,j)} \}_{j\in\nbrs_i} ]$, 
$i\in\until{n}$.

By exploiting the undirected nature and the connectivity of graph $\GG$, the
Lagrangian~\eqref{eq:partial_lagrangian} can be rewritten as
\begin{align}
  \label{eq:partial_lagrangian_separable}
  \LL( \bx,\Lambda & )  = \sum_{i=1}^n \bigg( f_i \big( x^{(i)}_i, \{x^{(i)}_j\}_{j\in \nbrs_i} \big)
  \\ \notag
  & \!\! + x_i^{(i) \top} \!\! \sum_{j\in \nbrs_i} (\lambda^{(i,j)}_i \!-\! \lambda^{(j,i)}_i) 
    +\!\! \sum_{j\in \nbrs_i} x_j^{(i) \top} \big( \lambda^{(i,j)}_j \!-\! \lambda^{(j,i)}_j \big) \!\bigg)
\end{align}
which is separable with respect to $y^{(i)}$, $i\in\until{n}$.

\begin{remark}
  It is worth noting that we have not dualized the local constraints
  $(x^{(i)}_i,\{x^{(i)}_j\}_{j\in \nbrs_i}) \in X_i$ (thus the notion of partial
  Lagrangian) since each of them will be handled by the agents in their local
  optimization problem.~\oprocend
\end{remark}

The dual function of~\eqref{eq:partitioned_problem_copies} is obtained by minimizing
the Lagrangian with respect to the primal variables, which gives
\begin{align*}
\begin{split}
  q( \Lambda ) & = \min_{\bx\in X_1\times \cdots \times X_n} \LL({\bx,\Lambda}) 
  \\
  & = \sum_{i=1}^n q_i \Big( 
    \big \{ \lambda^{(i,j)}_i,\lambda^{(j,i)}_i,\lambda^{(i,j)}_j,\lambda^{(j,i)}_j \big \}_{j\in\nbrs_i}
  \Big)
\end{split}
\end{align*}
with
\begin{align}
  \notag
  & q_i \Big( \big \{
  \lambda^{(i,j)}_i,\lambda^{(j,i)}_i,\lambda^{(i,j)}_j,\lambda^{(j,i)}_j
  \big \}_{j\in\nbrs_i} \Big) =\\ 
  &\hspace{0.5cm} \min_{ \big( x^{(i)}_i, \{x^{(i)}_j\}_{j\in \nbrs_i}  \big ) \in X_i} \!\!
    \Big( f_i \big( x^{(i)}_i, \{x^{(i)}_j\}_{j\in \nbrs_i} \big)
  \label{eq:qi_defintion}
  \\ 
  \notag
  &\hspace{0.9cm} + x_i^{(i) \top} \!\! \sum_{j\in \nbrs_i} \!\! (\lambda^{(i,j)}_i \!-\! \lambda^{(j,i)}_i) 
    +\!\! \sum_{j\in \nbrs_i} \!\! x_j^{(i) \top} (\lambda^{(i,j)}_j \!-\! \lambda^{(j,i)}_j)\Big).
\end{align}

  Notice that, since each $X_i$ is compact and nonempty, the minimum
  in~\eqref{eq:qi_defintion} is (uniquely) attained, so that $q_i$ is always
  finite. Thus, the dual problem of~\eqref{eq:partitioned_problem_copies} is
the following unconstrained optimization problem
\begin{align}
  \max_{\Lambda} \: & \: \sum_{i=1}^n q_i 
    \Big( \big \{ \lambda^{(i,j)}_i,\lambda^{(j,i)}_i,\lambda^{(i,j)}_j,\lambda^{(j,i)}_j \big \}_{j\in\nbrs_i}\Big).
\label{eq:dual_problem}
\end{align}

\begin{remark}
  Let $\map{\varphi}{\real^d}{\real \cup \{+\infty \}}$, its conjugate function 
  $\map{\varphi^*}{\real^d}{\real}$ is defined as 
  \begin{align*}
    \varphi^*(z) := \sup_x \big( z^\top x - \varphi(x) \big).
  \end{align*}
  Then,
  \begin{align*}
    & q_i\Big( \big \{
    \lambda^{(i,j)}_i,\lambda^{(j,i)}_i,\lambda^{(i,j)}_j,\lambda^{(j,i)}_j
    \big \}_{j\in\nbrs_i}\Big) = \\[1.2ex]
   &\hspace{1.4cm} -f_i^*\Big( \sum_{j\in \nbrs_i} \!\! (\lambda^{(i,j)}_i \!-\! \lambda^{(j,i)}_i) , 
   \big \{ (\lambda^{(i,j)}_j \!-\! \lambda^{(j,i)}_j) \big \}_{j \in \nbrs_i} \Big),
  \end{align*}  
  with $f_i^*$ being the conjugate function of $f_i$.~\oprocend
\label{rem:qi_conjugate}
\end{remark}

\begin{remark}
  It is worth noting that each $q_i$ does not depend on the entire set of dual
  variables $\Lambda$, but it exhibits a sparse structure, i.e., it is a function 
  of the dual variables of the neighbors $\nbrs_i$ only.~\oprocend
\end{remark}

\subsection{(Synchronous) Partitioned Dual Decomposition (\synch_name/) distributed algorithm}
\label{sec:PBDD}
With the dual problem in hand, a gradient algorithm on the dual problem,
\cite[Chapter 6]{bertsekas1999nonlinear}, can be applied. This results into a
minimization on the primal variables and a linear update on the dual
variables. As we will show in the analysis, this gives rise to the \synch_name/
distributed algorithm, which is formally stated, from the perspective of node
$i$, in the following table.

We point out that each node $i\in\until{n}$ stores and updates the primal
variables $x^{(i)}_i$ and $x^{(i)}_j$, $j\in \nbrs_i$, and the dual variables
$\lambda^{(i,j)}_i$ and $\lambda^{(i,j)}_j$, $j\in \nbrs_i$.

\begin{algorithm}[H]
  \floatname{algorithm}{Distributed Algorithm}
  \begin{algorithmic}[0] 
    \Statex\textbf{Processor states}:
    $( x_i^{(i)}, \{ x_j^{(i)} \}_{j\in\nbrs_i}  )$ and
    $\{ \lambda_i^{(i,j)}, \lambda_j^{(i,j)}\}_{j\in \nbrs_i}$\smallskip
    \Statex \textbf{Evolution}:
    \StatexIndent[0.25] \textsc{for}: $t=1,2,\ldots$ \textsc{do}

    \StatexIndent[0.5] Compute and broadcast primal variables
    \begin{align}
    \begin{split}    
      & \big( x^{(i)}_i (t \!+\! 1),\{ x^{(i)}_j (t \!+\! 1) \}_{j\in \nbrs_i}\big)=
      \\
      & \hspace{0.5cm} \argmin_{(x_i, \{x_j\}_{j\in \nbrs_i}) \in X_i}
      \bigg( f_i \big( x_i, \{x_j\}_{j\in \nbrs_i} \big)
        + x_i^\top  \textstyle\sum\limits_{j\in \nbrs_i} \! \Big( \lambda^{(i,j)}_i (t) - \lambda^{(j,i)}_i (t) \Big) 
        + \textstyle\sum\limits_{j\in \nbrs_i} \! x_j^\top \Big( \lambda^{(i,j)}_j (t) - \lambda^{(j,i)}_j (t) \Big)  \!\! \bigg).
    \end{split}
    \label{eq:local_minimization}
   \end{align}

    \StatexIndent[0.5] Update and broadcast dual variables via
    \begin{equation}
    \label{eq:synch_ascent}
    \begin{split}
      & \lambda^{(i,j)}_i (t \!+\! 1) = \lambda^{(i,j)}_i(t) \!+\! \alpha_i \Big(\! x^{(i)}_i (t \!+\! 1) - x^{(j)}_i (t \!+\! 1) \! \Big)
      \\
      & \lambda^{(i,j)}_j (t \!+\! 1) = \lambda^{(i,j)}_j(t) \!+\! \alpha_i \Big( \! x^{(i)}_j (t \!+\! 1) - x^{(j)}_j (t \!+\! 1) \!\Big)
    \end{split}
    \end{equation}    
    \StatexIndent[0.5] for all $j\in\nbrs_i$.
  \end{algorithmic}
  \caption{\synch_name/} 
  \label{alg:PBDD}
\end{algorithm}

Before studying the convergence properties of the proposed algorithm, let us comment on
its scalability and how it compares with standard dual gradients algorithms.
First, observe that each node has to keep in memory the set of variables
$x_i^{(i)}$, $\big \{ x_j^{(i)} \big \}_{j \in \nbrs_i}$, $ \big\{\lambda^{(i,j)}_i,
  \lambda^{(i,j)}_j \big\}_{j \in \nbrs_i}$, namely a number of variables equal to
$1+3 |\nbrs_i|$.
Second, the step-sizes $\alpha_i$, $i\in\until{n}$ are constant, local and can be 
initialized via local computations. More details are given in
Theorem~\ref{thm:synch_convergence}.

\begin{remark}
  Notice that, differently from existing dual decomposition schemes,
  our algorithms do not enforce any symmetry in the dual variables,
  i.e., in general $\lambda^{(i,j)}_i(t) \neq - \lambda^{(j,i)}_i(t)$.
  The symmetry, although not necessary, can be imposed if the agents
  select a common step-size $\alpha_i = \alpha$, for all $i\in\until{n}$,
  and properly initialize their dual variables. As a consequence, the
  algorithm can be simplified to have only one communication round to 
  perform both the local minimization and the ascent.\oprocend
\end{remark}

The convergence properties of \synch_name/ (Distributed Algorithm~\ref{alg:PBDD}) are 
established in the following theorem.
\begin{theorem}
  Let Assumptions~\ref{ass:strong_convexity}, \ref{ass:convex_constraints} and \ref{ass:slater}
  hold true and assume the step-sizes $\alpha_i$, $i\in\until{n}$,
  to be constant and such that $0 <\alpha_i \le \frac{1}{nL_i}$, with
  \begin{align}
    L_i = \sqrt{ 2 \textstyle\sum\limits_{j\in \nbrs_i} 
        \Big(\frac{1}{\sigma_i} + \frac{1}{\sigma_j}\Big)^2}, \quad \forall\, i \in \until{n}.
  \label{eq:L_i_synch_definition}
  \end{align}
  Then, the sequence $\{ \Lambda_1 (t), \ldots, \Lambda_n (t) \}$ generated by
  \synch_name/ (Distributed Algorithm~\ref{alg:PBDD}) converges in objective
  value to the optimal cost $f^\star$ of
  problem~\eqref{eq:partitioned_problem}. Moreover, let
  $x^\star= (x_1^{\star\top},\ldots, x_n^{\star\top})^\top$ be the unique
  optimal solution of~\eqref{eq:partitioned_problem}, then each primal sequence
  $x_i^{(i)}(t)$ generated by \synch_name/ is such that
  \begin{align*}
    \lim_{t\to\infty} \, \| x_i^{(k)}(t) - x_i^\star\| = 0,
  \end{align*}%
  for all $i\in \until{n}$ and $k\in\{i\}\cup\nbrs_i$.
\label{thm:synch_convergence}
\end{theorem}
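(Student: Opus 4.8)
The plan is to establish convergence by recognizing the \synch_name/ algorithm as a scaled gradient ascent on the dual problem~\eqref{eq:dual_problem}, and then invoking the standard convergence theory for gradient methods applied to a smooth concave dual. The overall structure has three phases: (i) show the dual problem is concave with a Lipschitz-continuous gradient and compute the gradient explicitly; (ii) show that the algorithm's updates are exactly a gradient step with the prescribed step-sizes and that the step-size bound guarantees descent (ascent); (iii) recover primal optimality from dual convergence using strong convexity.

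First I would verify that, under Assumptions~\ref{ass:strong_convexity} and~\ref{ass:convex_constraints}, each $q_i$ is concave and continuously differentiable in its dual arguments. The key fact is Remark~\ref{rem:qi_conjugate}: since $q_i$ is expressed through the conjugate $f_i^*$, and strong convexity of $f_i$ with parameter $\sigma_i$ implies that $f_i^*$ has a gradient that is Lipschitz with constant $1/\sigma_i$, the dual function $q$ inherits a Lipschitz-continuous gradient. The next step is to compute $\nabla q$ componentwise and confirm that the partial derivative with respect to $\lambda_i^{(i,j)}$ equals precisely the consistency residual $x_i^{(i)}(t+1) - x_i^{(j)}(t+1)$ evaluated at the primal minimizers~\eqref{eq:local_minimization}, and similarly for $\lambda_j^{(i,j)}$. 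This matches the dual update~\eqref{eq:synch_ascent}, so the algorithm is a scaled gradient ascent where the diagonal scaling is $\alpha_i$ on the block $\Lambda_i$. I would then bound the block Lipschitz constant: assembling the per-edge contributions $\bigl(\tfrac{1}{\sigma_i}+\tfrac{1}{\sigma_j}\bigr)$ over $j\in\nbrs_i$ and accounting for the factor $n$ (arising from the global Lipschitz constant of the full gradient in the separable-block structure) yields the quantity $L_i$ in~\eqref{eq:L_i_synch_definition}, so that $0<\alpha_i\le \tfrac{1}{nL_i}$ falls within the admissible range for monotone ascent.

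With the gradient-method machinery in place, convergence in objective value to the optimum of~\eqref{eq:dual_problem} follows from the standard descent lemma together with strong duality. Assumption~\ref{ass:slater} (Slater's condition via relative-interior intersection) guarantees zero duality gap, so the dual optimum equals $f^\star$; this is what lets me conclude convergence in objective value to the optimal cost of~\eqref{eq:partitioned_problem}. For the primal recovery, I would use that strong convexity of each $f_i$ makes the local minimizer in~\eqref{eq:local_minimization} a single-valued, continuous function of the dual variables, and moreover that the primal iterate depends Lipschitz-continuously on $\Lambda$. As $\Lambda(t)$ approaches the dual optimizer and the consistency residuals vanish, the local minimizers converge to the feasible point satisfying all coherence constraints, which by uniqueness (guaranteed by Assumptions~\ref{ass:strong_convexity} and~\ref{ass:convex_constraints}) must be the global optimum $x^\star$; this gives $\|x_i^{(k)}(t)-x_i^\star\|\to 0$ for each $i$ and each $k\in\{i\}\cup\nbrs_i$.

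The main obstacle I anticipate is twofold. The sharper difficulty is the primal convergence claim: gradient ascent on the dual only guarantees convergence in dual objective value, not necessarily convergence of the dual iterates $\Lambda(t)$ themselves (the dual maximizer need not be unique, since the redundant consistency constraints create a nontrivial nullspace in the dual). I would therefore need to argue primal convergence directly from the vanishing of the constraint residuals plus strong convexity, rather than from convergence of $\Lambda(t)$ — likely by showing that boundedness of the dual iterates together with the strong convexity of the primal forces the primal sequence into a shrinking neighborhood of $x^\star$. The more technical, but routine, obstacle is the precise bookkeeping that produces the exact constant $L_i$ in~\eqref{eq:L_i_synch_definition}: correctly tracking the factor of $2$ and the square over $\nbrs_i$, and justifying the global factor $n$, requires care in relating the block-diagonal scaled step to the Lipschitz constant of the separable dual gradient.
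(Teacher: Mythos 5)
Your proposal is correct and follows essentially the same route as the paper's proof: identifying the dual gradient components with the consistency residuals via conjugate-function calculus, recognizing the updates as a diagonally scaled gradient ascent whose step-sizes $\alpha_i \le \frac{1}{nL_i}$ satisfy the block-Lipschitz ascent condition, invoking Slater's condition for strong duality, and recovering primal convergence from the vanishing residuals together with strong convexity and uniqueness of $x^\star$. Your remark that dual objective convergence does not by itself give convergence of $\Lambda(t)$ is a fair point that the paper's own argument also treats only loosely, but it does not change the approach.
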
%
\begin{proof}
  We structure the proof of the first statement in three parts in which we show
  that: (i) the dual gradient has a block structure and smoothness, (ii) the
  distributed algorithm implements a diagonally-scaled gradient method, and
  (iii) strong duality holds.
  First, consider the dual problem~\eqref{eq:dual_problem} and a block
  partitioning of dual variables $\Lambda = [\Lambda_1,\ldots,\Lambda_n]$, with
\begin{align}
  \Lambda_i := 
    \Big( \{ \lambda_i^{(i,j)} \}_{j\in\nbrs_i}, \{ \lambda_j^{(i,j)}{j\in\nbrs_i}  \} \Big)
\label{eq:lambda_i_definition}
\end{align}
representing the local variables of node $i$, for all $i\in\until{n}$.
Under Assumption~\ref{ass:strong_convexity}, the dual function $q(\Lambda)$
is guaranteed to have block-coordinate Lipschitz continuous gradient $\nabla q(\Lambda)$ 
with block constants $L_i$, $i\in\until{n}$, given in~\eqref{eq:L_i_synch_definition}. 
In fact, we can explicitly compute the components of $\nabla q(\Lambda)$ associated 
to each block $\Lambda_i$, denoted hereafter as $\nabla_{\Lambda_i} q(\Lambda)$, 
by using the chain rule of derivation and the conjugate function notation. 
We have that
\begin{align}
  \begin{split}
    \frac{\partial q (\Lambda)}{\partial \lambda_i^{(i,j)}} & = (\nabla f_i^*)_i - (\nabla f_j^*)_i, \: j\in\nbrs_i
    \\
    \frac{\partial q (\Lambda)}{\partial \lambda_j^{(i,j)}} & = (\nabla f_i^*)_j - (\nabla f_j^*)_j, \: j\in\nbrs_i,
  \end{split}
  \label{eq:dual_gradient_conjugate}
\end{align}
where $(\nabla f_i^*)_i$ denotes the $i$-th component of $\nabla f_i^*$ and we 
omit the argument of $\nabla f_i^*$ to take light the notation.
Since for all $i\in\until{n}$, each $f_i$ is a strongly convex function, then
the gradient of its conjugate function $\nabla f_i^*$ is Lipschitz continuous
with constant $1/\sigma_i$, \cite[Chapter~X, Theorem~4.2.2]{hiriart1993convex}.
By considering the Euclidean $2$-norm, in light
of~\eqref{eq:dual_gradient_conjugate} and by simple algebraic manipulation, we
can conclude that also $\nabla_{\Lambda_i} q(\Lambda)$ is Lipschitz continuous
with constant
  \begin{align*}
    L_i = \sqrt{ \textstyle\sum\limits_{j\in \nbrs_i} \Big(\frac{1}{\sigma_i} + 
    \frac{1}{\sigma_j}\Big)^2 + \textstyle\sum\limits_{j\in \nbrs_i} \Big(\frac{1}{\sigma_i} + \frac{1}{\sigma_j}\Big)^2},
  \end{align*}
  which matches~\eqref{eq:L_i_synch_definition}.
  Second, we show that our \synch_name/ distributed algorithm implements a
  scaled gradient ascent method to solve problem~\eqref{eq:dual_problem}.
  Consider a diagonal positive definite matrix defined as
  $W:=\diag (\alpha_1,\ldots, \alpha_n) \preceq \diag ( \frac{1}{nL_1},\ldots,
  \frac{1}{nL_n})$. Formally, the scaled gradient ascent method can be written as
  \begin{align}
    \Lambda(t+1) = \Lambda(t) + W \nabla q(\Lambda(t)),
  \label{eq:scaled_gradient_ascent}
  \end{align}
  where $t$ denotes the iteration counter. 
  Since each entry of the scaling
  matrix satisfies $0<\alpha_i\le \frac{1}{nL_i}$ for all $i\in\until{n}$, then
  the following condition holds~\cite[Theorem~8]{richtarik2016parallel}
  \begin{align*}
  \setlength{\arraycolsep}{1pt}
  q(\Lambda(t) + \delta )  \ge  q(\Lambda(t)) + \nabla q(\Lambda(t))^{\!\top} \delta 
  - \frac{n}{2} \delta^{\! \top \!} \!
  \begin{bmatrix}
    L_1 \\[-0.2cm] & \ddots \\[-0.05cm] & & L_n
  \end{bmatrix} \!
  \delta,
  \end{align*}
  for every perturbation $\delta$.
  Thus, using the same line of proof of the gradient algorithm~\cite[Chapter~2]{bertsekas1999nonlinear},
  we can conclude that the sequence $\{\Lambda(t)\}$ generated by iteration~\eqref{eq:scaled_gradient_ascent} 
  converges in objective value to the optimal cost $q^\star$ of~\eqref{eq:dual_problem}.
  Since $W$ is diagonal, then~\eqref{eq:scaled_gradient_ascent}
  splits in a component-wise fashion giving
  \begin{align}
    \Lambda_i(t+1) = \Lambda_i(t) + W_{ii} \nabla_{\Lambda_i} q(\Lambda(t)), \: i\in\until{n},
  \label{eq:scaled_gradient_ascent_block}
  \end{align}
  where $W_{ii}$ denotes the $(i,i)$-th entry of $W$.
  By using the following property of conjugate functions
  \begin{align*}
    \nabla \varphi^*(z) = \argmin_x \big( \varphi (x) - z^\top x \big),
  \end{align*}
  we have that the primal minimization~\eqref{eq:local_minimization} 
  computes $\nabla f_i^*$ evaluated at the point
  $\big( \sum_{j\in \nbrs_i} (\lambda^{(i,j)}_i(t) -
  \lambda^{(j,i)}_i(t)) , \{ (\lambda^{(i,j)}_j(t) - \lambda^{(j,i)}_j(t))\}_{j \in \nbrs_i} \big)$.
  Then
  \begin{align}
  \begin{split}
    \frac{\partial q (\Lambda(t))}{\partial \lambda_i^{(i,j)}} & = x_i^{(i)}(t+1)- x_i^{(j)}(t+1), \: j\in\nbrs_i
    \\
    \frac{\partial q (\Lambda(t))}{\partial \lambda_j^{(i,j)}} & = x_j^{(i)}(t+1)  - x_j^{(j)}(t+1), \: j\in\nbrs_i,
  \end{split}
  \label{eq:dual_gradient_xixj}
  \end{align}
  so that update~\eqref{eq:synch_ascent} is the scaled gradient 
  ascent~\eqref{eq:scaled_gradient_ascent_block}.
  Third and final, by Assumption~\ref{ass:slater} (Slater's condition), strong
  duality between problems~\eqref{eq:partitioned_problem_copies}
  and~\eqref{eq:dual_problem} holds. Moreover, since
  problems~\eqref{eq:partitioned_problem_copies}
  and~\eqref{eq:partitioned_problem} are equivalent, then they both have optimal
  cost $q^\star=f^\star$. Thus, the sequence $\{\Lambda(t)\}$ generated by
  \synch_name/ converges in objective value to the optimal cost $f^\star$
  of~\eqref{eq:partitioned_problem}.

  For the second part of the statement, we first notice that in light of
  Assumptions~\ref{ass:strong_convexity} and \ref{ass:convex_constraints},
  problem~\eqref{eq:partitioned_problem} has a unique optimal solution
  $x^\star= (x_1^{\star\top},\ldots, x_n^{\star\top})^\top$.
  Further, since problem~\eqref{eq:partitioned_problem_copies} is equivalent to
  problem~\eqref{eq:partitioned_problem}, then $x^\star$ is the unique optimal
  solution also for problem~\eqref{eq:partitioned_problem_copies}.
  Finally, the first order optimality condition for the (unconstrained) dual
  problem~\eqref{eq:dual_problem} is $\nabla q (\Lambda^\star) = 0$, where
  $\Lambda^\star$ is a limit point of the sequence $\{\Lambda(t)\}$ (which
  exists by the Lipschitz continuity of $\nabla q(\Lambda)$). This allows us to
  conclude, by equation~\eqref{eq:dual_gradient_xixj}, that the limit point of
  the primal sequences $\{ x_i^{(i)}(t), \{ x_j^{(i)} \}_{j\in\nbrs_i}(t) \}$
  satisfy the primal coherence constraints. Thus, in the limit the copies
  $x_i^{(i)}, \{ x_i^{(j)} \}_{j\in\nbrs_i} $ of the variable $x_i$ are equal to
  the (unique) optimal $x_i^\star$. Iterating on $i\in\until{n}$ the proof
  follows.
\end{proof}

\begin{remark}
  Alternative expressions for $L_i$ in~\eqref{eq:L_i_synch_definition} can be
  used. Larger upper bounds on the step-sizes $\alpha_i$ can be established by
  exploiting tailored descent conditions. See, e.g.,
  works~\cite{nesterov2012efficiency,richtarik2016parallel,necoara2016parallel}.~\oprocend
\end{remark}

\subsection{Asynchronous Partitioned Dual Decomposition (\asynch_name/) 
distributed algorithm}
\label{sec:AsynPBDD}

In this section we present an asynchronous partitioned distributed algorithm, 
and prove its convergence with high probability. This algorithm can be interpreted as 
an extension of the \synch_name/ distributed algorithm.

We consider an asynchronous protocol where each node has its own
concept of time defined by a local timer, which randomly and independently of
the other nodes triggers when to awake itself.
Each node is in an \emph{idle} mode, wherein it continuously receives messages
from neighboring nodes, until it is triggered either by the local timer or by a
message from neighboring nodes. When a trigger occurs, it switches into an
\emph{awake} mode in which it updates its local variables and possibly transmits
the updated information to its neighbors.
The timer is modeled by means of a local clock
$\tau_i\in\real_{\ge 0}$ and a randomly generated waiting time $T_i$. The timer
triggers the node when $\tau_i = T_i$, so that the node switches
to the awake mode and, after running the local computation, resets $\tau_i = 0$
and extracts a new realization of $T_i$.
We make the following assumption on the local waiting times $T_i$.
\begin{assumption}[Exponential i.i.d. local timers]
	The waiting times between consecutive triggering events are i.i.d. random variables 
	with same exponential distribution.~\oprocend  
\label{ass:timers}
\end{assumption}

Informally, the asynchronous distributed optimization algorithm is as follows.
When a node $i$ is in idle, it continuously receives messages from awake
neighbors. If the local timer $\tau_i$ triggers or new dual variables
$\lambda_i^{(j,i)}$, $\lambda_j^{(j,i)}$ are received, it wakes up.
When node $i$ wakes up, it updates and broadcasts its 
primal variable $y^{(i)} = \big( x_i^{(i)},\{x_j^{(i)}\}_{j\in\nbrs_i}\big)$,
computed through a local constrained minimization.
Moreover, if the transition was due to the local timer triggering, 
then it also updates and broadcasts its local dual variables 
$\lambda_i^{(i,j)}$ and $\lambda_j^{(i,j)}$, $j\in\nbrs_i$.
Since there is no global iteration counter, we highlight the difference 
between updated and not updated values during the ``awake'' phase, 
by means of a ``$^+$'' superscript symbol, 
e.g., we denote the updated primal variable as $x_i^{(i)+}$.

We want to stress some important aspects of the idle/awake cycle. First, these
two phases are regulated by local timers and local information exchange,
without the need of any central clock. Second, we assume that the computation 
in idle takes a negligible time compared to the one performed in the awake phase. 
Moreover, a constant, local step-size $\alpha_i$ is used in the ascent step,
which can  be initialized by means of local exchange of information between
neighboring nodes.
Finally, we point out that each agent uses the most updated values that are 
locally available to perform every computation.

The \asynch_name/ distributed algorithm is formally described in the following table.

\begin{algorithm}%
  \floatname{algorithm}{Distributed Algorithm}
  \begin{algorithmic}[0] 

    \Statex\textbf{Processor states}:
    $( x_i^{(i)}, \{ x_j^{(i)} \}_{j\in\nbrs_i}  )$ and
    $\{ \lambda_i^{(i,j)}, \lambda_j^{(i,j)}\}_{j\in \nbrs_i}$\smallskip

      \StatexIndent[0.5] Set $\tau_i = 0$ and get a realization $T_i$\smallskip
    
    \Statex \textbf{Evolution}:
    \Label \texttt{\textbf{\textit{IDLE:}}}

      \StatexIndent[0.4] \textsc{while:} $\tau_i < T_i$ \textsc{do}:\smallskip
      \StatexIndent[0.8] Receive $\lambda_i^{(j,i)}$, $\lambda_j^{(j,i)}$
      and/or $x_i^{(j)}$, $x_j^{(j)}$ from $j\in \nbrs_i$.
      
      \StatexIndent[0.8] \textsc{if}:\! dual variables are received go to \texttt{\textbf{\textit{AWAKE}}}.

     \StatexIndent[0.5] go to \texttt{\textbf{\textit{AWAKE}}}.

    \Label \texttt{\textbf{\textit{AWAKE:}}}
  
        \StatexIndent[0.5] Compute and broadcast
        \begin{small}
        \begin{align*}
	        & \big( x^{(i)+}_i,\{ x^{(i)+}_j\}_{j\in \nbrs_i}\big) = 
          \argmin_{ ( x_i,\{ x_j \}_{j\in\nbrs_i} ) \in X_i }
          \! f_{i}(x_i, \{ x_j \}_{j\in\nbrs_i} ) 
	          + x_i^\top \sum_{j \in \nbrs_i} 
	          \big(\lambda_i^{(i,j)}  - \lambda_i^{(j,i)} \big) 
	          +  \sum_{j \in \nbrs_i} x_j^\top \big( \lambda_j^{(i,j)}  - \lambda_j^{(j,i)} \big)
        \end{align*}
        \end{small}

      \StatexIndent[0.5] \textsc{if}:\! $\tau_i=T_i$ \textsc{then}:\! update and broadcast
      \begin{align}
      \label{eq:asynch_ascent}
      \begin{split}
        \lambda_i^{(i,j)+} & = \lambda_i^{(i,j)} + \alpha_i \big( x_i^{(i)+} - x_i^{(j)} \big) , \:\: \forall \, j \in \nbrs_{i},
        \\
        \lambda_j^{(i,j)+} & = \lambda_j^{(i,j)} + \alpha_i \big( x_j^{(i)+} - x_j^{(j)} \big) , \:\: \forall \, j \in \nbrs_{i},
      \end{split}
      \end{align}

      \StatexIndent[1] set $\tau_i=0$ and get a new realization $T_i$.\smallskip
      \StatexIndent[0.5] Go to \texttt{\textbf{\textit{IDLE}}}.

  \end{algorithmic}
  \caption{ \asynch_name/ } 
  \label{alg:asynch_pbdd}
\end{algorithm}

It is worth pointing out that being the algorithm asynchronous, for the 
analysis we need to carefully formalize the concept of algorithm iterations. 
We will use a nonnegative integer variable $t$ indexing a change in the whole 
state $\Lambda = [ \Lambda_1 \ldots \Lambda_n ]$ of the distributed algorithm. 
In particular, each triggering will induce an iteration of the distributed optimization 
algorithm and will be indexed with $t$.
We want to stress that this (integer) variable $t$ does not need to be known by the agents. 
That is, this timer is not a common clock and is only introduced for the sake of analysis.

\begin{theorem}
  Let Assumptions~\ref{ass:strong_convexity}, \ref{ass:convex_constraints} and~\ref{ass:slater}
  hold true.
  Let the timers $\tau_i$ satisfy Assumption~\ref{ass:timers} and step-sizes $\alpha_i$ 
  be constant and such that $0 < \alpha_i \le 1/L_i$, with 
  \begin{align}
    L_i = \sqrt{ 2 \textstyle\sum\limits_{j\in \nbrs_i}
      \Big(\frac{1}{\sigma_i} + \frac{1}{\sigma_j}\Big)^2}, \quad \forall\, i \in \until{n}.
  \label{eq:L_i_asynch_definition}
  \end{align} 
  Then, the random sequence $\{ \Lambda_1 (t), \ldots, \Lambda_n (t) \}$ generated by
  the \asynch_name/ (Distributed Algorithm~\ref{alg:asynch_pbdd}), %
  converges with high probability in objective value to the optimal cost
  $f^\star$ of problem \eqref{eq:partitioned_problem}, i.e.,
  for any $\varepsilon \in (0, q_{ 0 } )$, with 
  $q_{ 0 } :=  q(\Lambda(0))$,
  and 
  target confidence $0 < \rho < 1$, there exists
  $\bar t(\varepsilon,\rho)$ such that for all $t \ge \bar t(\varepsilon,\rho)$ it holds:
$\Pr \Big( \big| q(\Lambda(t)) - f^\star \big| \le \varepsilon  \Big) \ge 1 - \rho.    $
  \label{thm:asynch_convergence}
\end{theorem}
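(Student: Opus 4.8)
The plan is to recognize the asynchronous algorithm as a \emph{randomized block-coordinate ascent} method applied to the dual problem~\eqref{eq:dual_problem}, and then invoke the high-probability convergence theory for such methods. The key structural observation, already established in the synchronous analysis (Theorem~\ref{thm:synch_convergence}), is that the dual function $q(\Lambda)$ is separable into blocks $\Lambda_i$, each $q_i$ depending only on neighboring dual variables, and that $\nabla q$ has block-coordinate Lipschitz continuous gradient with constants $L_i$ given in~\eqref{eq:L_i_asynch_definition}. First I would argue that whenever node $i$'s timer triggers, the awake phase performs exactly one block update on $\Lambda_i$: the local minimization computes $\nabla f_i^*$ (hence, via~\eqref{eq:dual_gradient_xixj}, the block gradient $\nabla_{\Lambda_i} q$), and the ascent step~\eqref{eq:asynch_ascent} is precisely $\Lambda_i^+ = \Lambda_i + \alpha_i \nabla_{\Lambda_i} q(\Lambda)$. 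The crucial point is that the \emph{idle} phase, in which the node merely recomputes its primal variables upon receiving neighbors' updates, does \emph{not} alter any dual variable, so the global dual state $\Lambda$ changes only at timer-triggered events.

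Second, I would establish that the sequence of triggering events induces a well-defined random block-selection process. Under Assumption~\ref{ass:timers} (i.i.d.\ exponential local timers), the superposition of the $n$ independent Poisson-type clocks guarantees that, almost surely, at each triggering instant exactly one node is selected, and the probability that node $i$ is the next to trigger is a fixed positive constant $p_i > 0$ independent of the history. Thus, indexing each trigger by the integer counter $t$, the update at step $t$ selects block $i_t$ with positive probability $p_i$, and the resulting iteration is
\begin{align*}
  \Lambda_{i_t}(t+1) = \Lambda_{i_t}(t) + \alpha_{i_t} \nabla_{\Lambda_{i_t}} q(\Lambda(t)),
\end{align*}
with all other blocks unchanged. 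This is exactly the randomized coordinate ascent iteration analyzed in~\cite{richtarik2014iteration,richtarik2016parallel}.

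Third, I would invoke the high-probability iteration-complexity result for randomized block-coordinate methods on a concave objective with block-Lipschitz gradients. Because each step-size satisfies $0 < \alpha_i \le 1/L_i$, the per-block descent (ascent) inequality guaranteed by the block-Lipschitz property ensures monotone expected improvement in the dual objective. Combining this with the positivity of the selection probabilities $p_i$ and the boundedness implied by strong duality (Assumption~\ref{ass:slater}, giving $q^\star = f^\star < +\infty$), the cited theory yields: for any accuracy $\varepsilon \in (0, q_0)$ and confidence $0 < \rho < 1$, there exists a threshold $\bar t(\varepsilon, \rho)$ such that $\Pr\big(|q(\Lambda(t)) - f^\star| \le \varepsilon\big) \ge 1 - \rho$ for all $t \ge \bar t(\varepsilon,\rho)$. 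Finally, as in the synchronous case, equivalence of~\eqref{eq:partitioned_problem_copies} and~\eqref{eq:partitioned_problem} identifies the dual optimum $q^\star$ with $f^\star$.

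I expect the main obstacle to be the second step: rigorously reducing the continuous-time, message-passing asynchronous dynamics to a clean discrete-time randomized coordinate scheme. One must verify that (a) the asynchronous use of \emph{possibly stale} neighbor information does not corrupt the block-gradient computation --- here the redundant, two-sided copy structure of~\eqref{eq:partitioned_problem_copies} is essential, since node $i$'s dual update~\eqref{eq:asynch_ascent} depends only on variables it stores or receives, so the gradient block $\nabla_{\Lambda_i} q$ is evaluated at a \emph{consistent} point; and that (b) the i.i.d.\ exponential timers genuinely produce a uniform-probability (or at least fixed-positive-probability) i.i.d.\ block selection with no simultaneous triggers almost surely, and that the abstract counter $t$ is well-defined despite the absence of a global clock. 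Once this reduction is justified, the convergence claim follows directly from the established coordinate-descent machinery.
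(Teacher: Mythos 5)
Your proposal is correct and follows essentially the same route as the paper's own proof: casting the asynchronous updates as a uniformly randomized block-coordinate ascent on the dual (using the block-Lipschitz constants already derived in the synchronous analysis), verifying via the local-information-consistency argument that each awake phase performs a genuine block-gradient step, and then invoking the high-probability convergence result of~\cite{richtarik2014iteration} together with strong duality to identify $q^\star$ with $f^\star$. The obstacle you flag in your final paragraph is exactly the point the paper resolves with its short induction argument on the freshness of neighboring copies.
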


\begin{proof}
  Our proof strategy is based on showing that the iterations of the asynchronous
  distributed algorithm can be written as the iterations of an ad-hoc version of
  the coordinate method~\cite{richtarik2014iteration}, applied to 
  the dual problem~\eqref{eq:dual_problem}.

  Let the optimization variable $\Lambda$ be partitioned in $n$ blocks 
  $[\Lambda_1,\ldots,\Lambda_n]$ as in~\eqref{eq:lambda_i_definition},
  then a coordinate approach consists in an iterative scheme in which only a 
  block-per-iteration, say $\Lambda_{i_t}$ at time $t$, of the entire optimization 
  variable $\Lambda$ is updated at time $t$, while all the other components 
  $\Lambda_j$ with $j \in \until{n} \setminus \{i_t\}$ stay unchanged.
  Formally, a coordinate iteration can be summarized as
  \begin{align}
  \begin{split}
    \Lambda_{i_t}(t+1) & = \Lambda_{i_t} (t) + \nabla_{\Lambda_{i_t}} q( \Lambda(t) )
    \\
    \Lambda_{j}(t+1) & = \Lambda_{j} (t), \hspace{2.4cm} j \neq i_t.
  \end{split}
  \label{eq:coord_ascent}
  \end{align}

  In the following, we show that the \asynch_name/ distributed algorithm
  implements~\eqref{eq:coord_ascent} with a uniform random selection of 
  the blocks.
  Since the timers $\tau_i$ trigger independently according to the same 
  exponential distribution, then from an external, global perspective, the induced 
  awaking process of the nodes corresponds to the following: only one node per 
  iteration, say $i_t$, wakes up randomly, uniformly and independently from previous 
  iterations.
  Thus, each triggering, which induces an \emph{iteration} of the distributed
  optimization algorithm and is indexed with $t$, corresponds to the (uniform)
  selection of a node in $\until{n}$ that becomes awake.

  Next we show by induction that if each node $i$ has an updated 
  version of the neighboring variables before it gets awake, then the same holds
  after the update. 
  When node $i$ wakes up, it uses for its update its own primal variables
  $x_{i}^{(i)}$ and $x_{j}^{(i)}$, $j\in\nbrs_i$, which are clearly updated
  since $i$ is the one modifying them. Moreover, node $i$ uses also
  $x_{i}^{(j)}$ and $x_{j}^{(j)}$, $j\in\nbrs_i$, which are received by
  neighboring nodes $j\in\nbrs_i$. These variables are updated by $j$ if itself
  or one of its neighbors becomes awake. In both cases node $j$ sends the
  updated variable to its neighbors (which include node $i$).
  An analogous argument holds for the dual variables. 

  Thanks to the argument just shown and by noticing
  that $\lambda_{i_t}^{({i_t},j)}$ and $\lambda_{j}^{({i_t},j)}$, $j\in\nbrs_{i_t}$ 
  are the components of $\Lambda_{i_t}$, we have that step~\eqref{eq:asynch_ascent}
  corresponds to step~\eqref{eq:coord_ascent} with $i_t$ randomly uniformly 
  distributed over $\until{n}$.
  Finally, recalling that (i) the cost function $q(\Lambda)$ of 
  problem~\eqref{eq:dual_problem} has block-coordinate Lipschitz 
  continuous gradient with respect to the blocks $\Lambda_i$ 
  (see proof of Theorem~\ref{thm:synch_convergence}) and (ii) 
  the step-sizes $\alpha_i$ are constant and such that
  $0 < \alpha_i \le 1/L_i$ with $L_i$ in~\eqref{eq:L_i_asynch_definition},
  we can invoke~\cite[Theorem~5]{richtarik2014iteration}
  to conclude that the coordinate method~\eqref{eq:coord_ascent} 
  (and equivalently the \asynch_name/ distributed algorithm) converges with 
  high probability to the optimal cost $q^\star$ of problem~\eqref{eq:dual_problem}.
  Recalling that strong duality between problems~\eqref{eq:partitioned_problem} 
  and~\eqref{eq:dual_problem} holds (see proof of Theorem~\ref{thm:synch_convergence}),
  then $q^\star = f^\star$, and the proof follows.
\end{proof}

\begin{remark}
  As highlighted in Theorem~\ref{thm:asynch_convergence} in order to 
  set the local step-sizes $\alpha_i$, each node $i$ should know the convexity
  parameter $\sigma_j$ of its neighbors but, differently from the synchronous case 
  (cf. condition~\eqref{eq:L_i_synch_definition}),
  does not need to know the total number of agents $n$ in 
  the network.~\oprocend
  \label{rem:asynch_local_stepsizes}
\end{remark}

To conclude this section, we notice that the asynchronous model employed in our distributed 
algorithm can be generalized. In fact, in the considered model timers are drawn from
a common exponential distribution, while independent and completely uncoordinated rules might
be more desiderable. This generalization is currently under investigation.

\section{Numerical Simulations}
\label{sec:simulations}
In this section we provide a numerical example showing the effectiveness of the 
proposed techniques.
We test the proposed distributed algorithms on a quadratic program 
enjoying the partitioned structure described in the previous sections.
Specifically, we consider a network of $n = 100$ agents communicating 
according to an undirected connected Erd\H{o}s-R\'enyi random graph 
$\GG$ with parameter $p=0.2$. 
Thus, letting $\big( x_i, \big\{ x_j \big\}_{j \in \nbrs_i} \big)$ denote a 
column vector, we consider the following partitioned optimization problem
\begin{small}
\begin{align}
\begin{split}
  \min_{x} \: & \:  
    \sum_{i=1}^n \big( x_i, \big\{ x_j \big\}_{j \in \nbrs_i} \big)^\top Q_i 
    \big( x_i, \big\{ x_j \big\}_{j \in \nbrs_i} \big) 
    + r_i^\top \! \big( x_i, \big\{ x_j \big\}_{j \in \nbrs_i} \big)
  \\[1.2ex]
  \subj \: & \: A_i \big( x_i, \big\{ x_j \big\}_{j \in \nbrs_i} \big) \preceq b_i, \hspace{0.5cm} i\in\until{n},
\end{split}
\label{eq:partitioned_QP}
\end{align}
\end{small}
where each $x_i \in \real^{m_i}$ and $m_i$ is uniformly drawn from $\{1,2,3,4\}$.
This optimization problem has the same partitioned structure discussed in
Section~\ref{sec:dual_dec_setup}. In particular, we have quadratic cost
functions $f_i ( x_i,  \{ x_j \}_{j \in \nbrs_i} )$ and linear 
constraints $X_i = \setdef{ ( x_i, \big\{ x_j \big\}_{j \in \nbrs_i} ) }{A_i ( x_i, \big\{ x_j \big\}_{j \in \nbrs_i} ) \preceq b_i}$.
The matrices $Q_i$ are positive definite with eigenvalues uniformly generated in
$[1,20]$, while the vectors $r_i$ have entries randomly generated in
$[0,100]$. Moreover, each pair $A_i$, $b_i$ describes a linear constraint
  having a number of rows uniformly drawn from $\{1,2\}$.  Each $A_i$ has
entries normally distributed with zero mean and unitary variance, while $b_i$
are suitably generated to always obtain feasible linear constraints.
For all $i\in\until{n}$, we use constant step-sizes $\alpha_i = L_i$  with $L_i$ computed 
as in~\eqref{eq:L_i_synch_definition} for the synchronous algorithm and as
in~\eqref{eq:L_i_asynch_definition} for the asynchronous case.
All the dual variables are initialized to zero.

In Figure~\ref{fig:synch_cost} we show the convergence rate of the synchronous
distributed algorithm by plotting the difference between the dual cost
$q(\Lambda(t))$ at each iteration $t$ and the optimal value $q^\star = f^\star$
of problem~\eqref{eq:partitioned_QP}.
\begin{figure}[!htpb]
  \centering
  \includegraphics[scale=0.87]{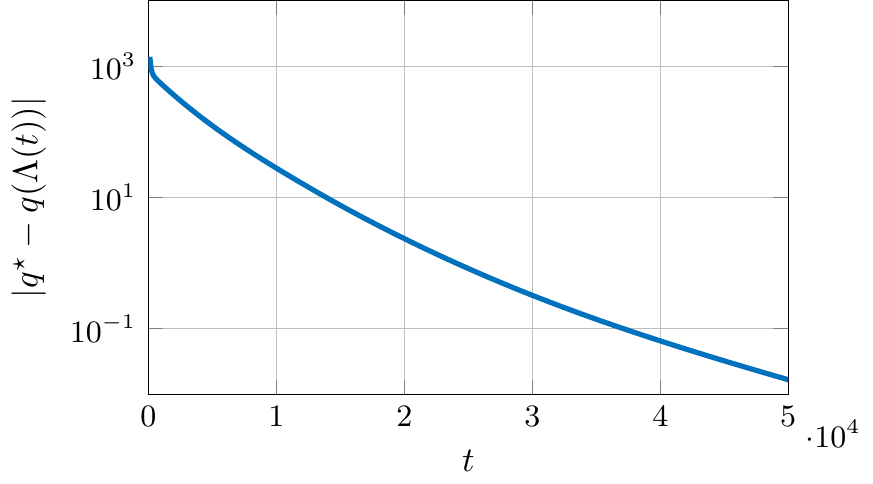}
  \caption{
    Evolution of the cost error for the synchronous distributed algorithm.
    }
  \label{fig:synch_cost} 
\end{figure}

In Figure~\ref{fig:synch_xx} we show the evolution of the difference 
between the generated primal sequence $\{ x_1^{(1)}(t),\ldots, x_n^{(n)}(t) \}$ 
and the (unique) optimal primal solution $x^\star$.
\begin{figure}[!htpb]
  \centering
  \includegraphics[scale=0.87]{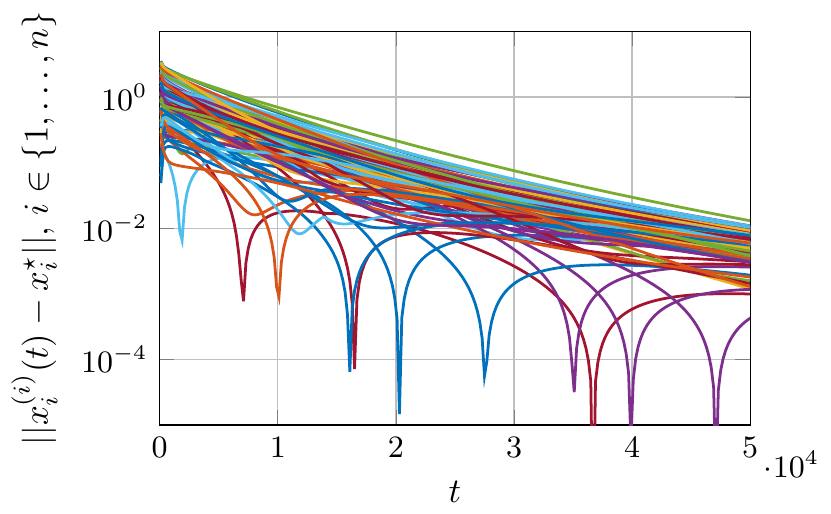}
  \caption{
    Evolution of the error on primal variables $x_i^{(i)}$, $i\in\until{n}$, for 
    the synchronous distributed algorithm.
    }
  \label{fig:synch_xx} 
\end{figure}

In Figure~\ref{fig:synch_agreement} we show the disagreement 
on the primal variable $x_2$ between neighboring nodes $\nbrs_2\cup  \{2\} $.
In particular, we plot the norm of $x_2^{(2)}(t)-x_2^{(j)}(t)$, for all $j\in\nbrs_2\cup  \{2\} $.
\begin{figure}[!htpb]
  \centering
  \includegraphics[scale=0.87]{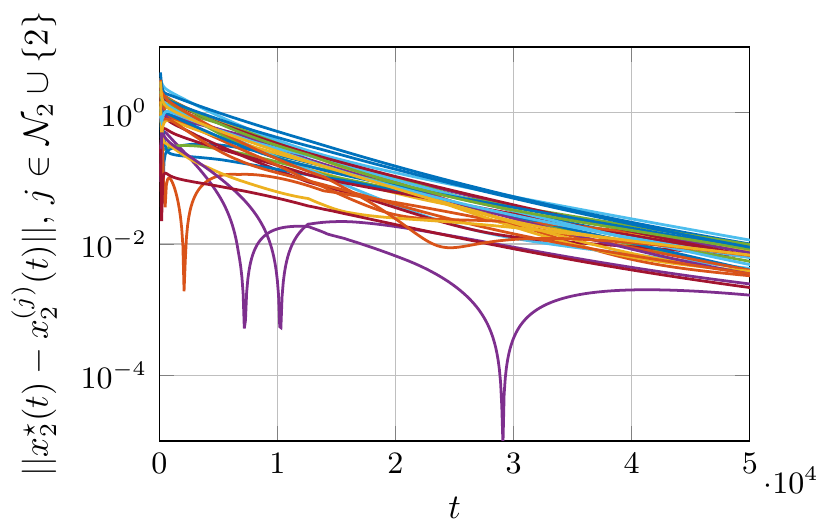}
  \caption{
    Evolution of the disagreement on $x_2$ between agents $2$ and its neighbors $j\in\nbrs_2$ %
    for the synchronous distributed algorithm.
    }
  \label{fig:synch_agreement} 
\end{figure}

Finally, in Figure~\ref{fig:asynch_cost} we show the convergence rate for the 
\asynch_name/ distributed algorithm.
Since we are dealing with an asynchronous algorithm, we normalize the iteration 
counter $t$ with respect to the total number of agents $n$.
It is worth noting the cost evolution is not monotone as expected 
for the class of randomized algorithms.
\begin{figure}
  \centering
  \includegraphics[scale=0.87]{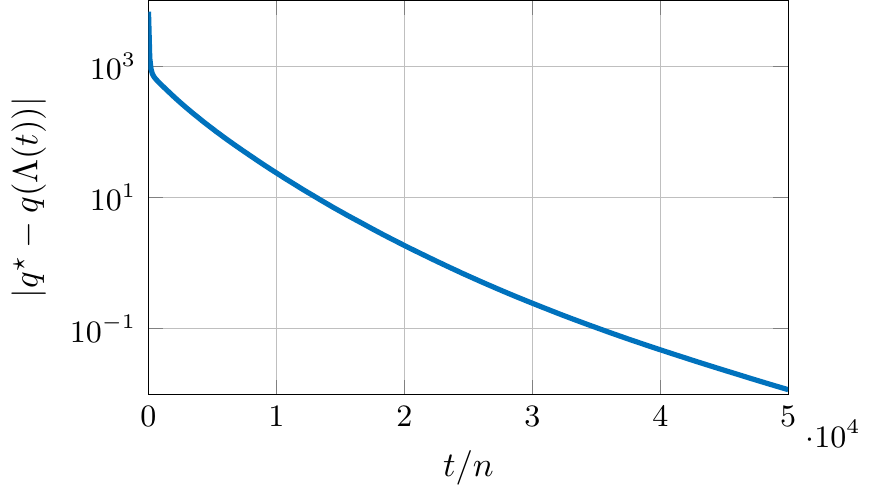}
  \caption{
    Evolution of the cost error for the asynchronous distributed algorithm.
    }
  \label{fig:asynch_cost} 
\end{figure}

\section{Conclusions}
\label{sec:conclusions}
In this paper we have proposed a synchronous and an asynchronous distributed
optimization algorithms, based on dual decomposition, for a novel partitioned
distributed optimization framework. In this framework each node in the network
is assigned a local state, objective function and constraint. The objective
function and the constraints only depend on the node state and on its neighbors'
states.
This scenario includes several interesting problems as network utility
maximization and resource allocation, static state estimation in power networks,
localization in wireless networks, and map building in robotic networks. 
The proposed algorithms are distributed and scalable and are shown to be
convergent under standard assumptions on the cost functions and on the
constraints sets.

\begin{small}
\bibliographystyle{IEEEtran} 
\bibliography{alias,partition_based} 
\vspace*{-1cm}
\end{small}

\end{document}